\def\bbz{{\mathbb Z}} 
\newtheorem{theorem}{Theorem}[section]
\newtheorem{conjecture}[theorem]{Conjecture} %%[section]
 \newtheorem{lemma}[theorem]{Lemma}
\def\qed{ \ \vrule width.2cm height.2cm depth0cm\smallskip}
\newenvironment{proof}{\noindent{\bf Proof.}}{\qed}
\begin{document}

\title{{\Large {\bf Edge-Coloring Series-Parallel Multigraphs}}}

\author{
\parbox{6cm} { 
Cristina G. Fernandes 
\thanks{Research supported in part by CNPq Proc.~No.~301174/97-0, 
FAPESP Proc.~No.~98/14329 and~96/04505-2 and 
PRONEX/CNPq 664107/1997-4 (Brazil).} \\ 
{\footnotesize Departamento de Ci\^encia da Computa\c c\~ao \\ 
  Instituto de Matem\'atica e Estat\'{\i}stica \\ 
  Universidade de S\~ao Paulo - Brazil \\ E-mail: cris@ime.usp.br}
} \ \ \ and \ \ \
\parbox{6cm} { 
Robin Thomas 
\thanks{Research supported in part by NSA under Grant
  No.~MDA904-98-1-0517 and by NSF under Grant No.~DMS-9970514.} \\ 
{\footnotesize School of Mathematics \\ 
  Georgia Institute of Technology \\ 
  Atlanta, GA 30332-0160, USA \\ E-mail: thomas@math.gatech.edu} }}

\date{\today}

\maketitle

\begin{abstract}
We give a simpler proof of Seymour's Theorem on edge-coloring series-parallel
multigraphs and derive a linear-time algorithm to check whether a given
series-parallel multigraph can be colored with a given number of colors.
\end{abstract}

\section{Introduction}

All {\em graphs} in this paper are finite, may have parallel edges,
but no loops. Let $k\ge 0$ be an integer.  A graph $G$ is {\em
$k$-edge-colorable}\ if there exists a map $\kappa: E(G) \rightarrow
\{1, \ldots, k\}$, called a {\em $k$-edge-coloring}, such that
$\kappa(e) \neq \kappa(f)$ for any two distinct edges $e, f$ of $G$
that share at least one end. The {\em chromatic index $\chi'(G)$}\ is
the minimum $k \geq 0$ such that $G$ is $k$-edge-colorable.  Clearly
$\chi'(G) \geq \Delta(G)$, where $\Delta(G)$ is the maximum degree of
$G$, but there is another lower bound.  Let $$ \Gamma(G) = \max\left\{
\frac{2|E(G[U])|}{|U|-1}: U \subseteq V(G), |U| \geq 3 \mbox{\ and
$|U|$ is odd}\right\}.$$ If $U$ is as above, then every matching in
$G[U]$, the subgraph induced by $U$, has size at most
$\lfloor{1\over2}|U|\rfloor$. Consequently, $\chi'(G) \geq \Gamma(G)$.
If $G$ is the Petersen graph, or the Petersen graph with one vertex
deleted, then $\chi'(G)>\max\{\Delta(G), \lceil\Gamma(G)\rceil\}$.
However, Seymour conjectures that equality holds for planar graphs:

\begin{conjecture}
\label{seymourconj}
If $G$ is a planar graph, then
$\chi'(G)=\max\{\Delta(G),\lceil\Gamma(G)\rceil\}$.
\end{conjecture}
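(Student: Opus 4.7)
My plan is a minimum counterexample argument combined with a discharging analysis that exploits planarity. Let $G$ be a planar graph minimizing $|V(G)|+|E(G)|$ with $\chi'(G)>k$, where $k:=\max\{\Delta(G),\lceil\Gamma(G)\rceil\}$. First I would derive basic reductions by showing that $G$ is connected and admits no ``splittable'' structure --- a small edge or vertex cut whose removal yields two pieces each covered by the inductive hypothesis, whose $k$-edge-colorings can then be spliced back together. Such reductions should force $G$ to be, say, simple and $3$-edge-connected, at which point we may assume every local modification genuinely alters the coloring problem.

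Next I would build a catalogue of forbidden local configurations. A typical entry is an edge $uv$ whose endpoints have small combined degree. The argument deletes a carefully chosen set $H$ of edges incident with $u$ and $v$, applies minimality to the smaller graph $G-H$ to obtain a $k$-edge-coloring, and then extends the coloring back over $H$ by Vizing-type fan sequences and Kempe chain swaps. Analogous arguments should forbid short faces bordered by low-degree vertices. The objective is to rule out enough configurations that a discharging argument on the Euler identity $\sum_{v}(d(v)-4)+\sum_{f}(\ell(f)-4)=-8$ produces a contradiction: assign initial charges to vertices and faces, redistribute from heavy vertices and long faces to light ones via rules tailored to the forbidden configurations, and verify that every element ends with nonnegative charge.

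The principal obstacle is that $\Gamma(G)$ is a \emph{global} invariant, whereas the reductions and coloring extensions are inherently local. Deleting even a single edge may leave $\lceil\Gamma(G-e)\rceil$ equal to $k$, so the inductive hypothesis does not cleanly pass to smaller graphs; conversely, the odd set $U$ realizing $\Gamma(G)=2|E(G[U])|/(|U|-1)$ encodes a matching-based obstruction that interacts awkwardly with Kempe swaps crossing the boundary of $U$. Any successful proof must weave together local extension arguments with a global analysis of near-tight odd subsets of $V(G)$, and reconciling these two viewpoints is where I expect the real work to lie. This difficulty is presumably the reason the paper, as announced in its abstract, focuses on series-parallel multigraphs: the recursive tree-like decomposition of that class tames both the coloring machinery and the odd-subset analysis at once, whereas in the general planar setting neither simplification is available.
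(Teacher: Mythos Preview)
The paper does not prove this statement: it is Seymour's Conjecture, presented explicitly as an open conjecture, and the paper remarks that it implies the Four-Color Theorem (take $G$ cubic, bridgeless, planar; then $\Delta(G)=3$ and an easy parity argument gives $\Gamma(G)\le 3$, so the conjecture asserts $\chi'(G)=3$, which is equivalent to the Four-Color Theorem). There is therefore no proof in the paper to compare your proposal against; the paper instead proves the special case of series-parallel graphs (Theorem~\ref{seymourthm}).

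Your proposal is not a proof but an outline of a programme, and you yourself identify its main gap correctly: the global parameter $\Gamma$ does not interact well with local reductions and Kempe exchanges. Two further concrete problems deserve mention. First, your claim that standard reductions ``should force $G$ to be \dots\ simple'' is unfounded in the multigraph setting: deleting one of two parallel edges and extending the coloring back requires $\deg(u)+\deg(v)-\mathrm{mult}(uv)\le k$, which is not guaranteed, so a minimum counterexample may very well carry high multiplicities --- indeed, parallel edges are precisely where the difficulty of the conjecture lies. Second, since the conjecture subsumes the Four-Color Theorem, any discharging proof along the lines you sketch must in particular contain a discharging proof of the Four-Color Theorem; the only such proofs known rely on large, computer-verified unavoidable sets of reducible configurations, so the ``catalogue of forbidden local configurations'' you envision cannot be small, and your plan as stated gives no indication of how to generate or verify it.
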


Conjecture~\ref{seymourconj} most likely does not have an easy proof,
because it implies the Four-Color Theorem. 
Marcotte~\cite{M01} proved that this conjecture holds for graphs 
which do not contain $K_{3,3}$ and do not contain $K_5 \setminus e$ 
as a minor (where $K_5 \setminus e$ is the graph obtained from $K_5$ 
by removing one of its edges). 
This result extended a previous result by Seymour~\cite{S90}, who
proved that his conjecture holds for series-parallel graphs (a graph
is {\em series-parallel}\ if it has no subgraph isomorphic to a
subdivision of $K_4$):

\begin{theorem}
\label{seymourthm}
If $G$ is a series-parallel graph, and $k$ is an integer with $k \geq
\max\{\Delta(G), \Gamma(G)\}$ then $G$ is $k$-edge-colorable.
\end{theorem}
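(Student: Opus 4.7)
The plan is to proceed by induction on $|V(G)|+|E(G)|$, with the trivial base case $E(G)=\emptyset$. For the inductive step I would invoke the classical structural fact that every series-parallel graph has a vertex $v$ of degree at most $2$, and case-analyze on $\deg(v)$ together with the local structure at $v$.

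The easy cases are $\deg(v)\le 1$, and $\deg(v)=2$ with both edges parallel to a common neighbor $u$. In each, I would delete $v$, obtaining a smaller series-parallel graph whose $\Delta$ and $\Gamma$ are no larger; apply induction to $k$-edge-color the smaller graph; and extend using the free colors at $u$, which has at most $k-1$ (respectively $k-2$) colored incident edges in the smaller graph, leaving one (respectively two) free colors that suffice for the extension.

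The substantive case is $\deg(v)=2$ with two distinct neighbors $u,w$. Here I would color $G-v$ by induction and attempt to extend. The extension fails only when the free color sets at $u$ and at $w$ admit no pair of distinct colors, one from each, which forces $\deg_G(u)=\deg_G(w)=k$ and the unique free color $c$ at $u$ to coincide with the unique free color at $w$. The standard remedy is a Kempe-chain swap: for an auxiliary color $c'\neq c$, recolor along the maximal $(c,c')$-alternating path starting at $u$; this frees $c'$ at $u$ without disturbing the free color at $w$, \emph{unless} that path happens to terminate at $w$.

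The main obstacle is precisely this latter scenario. If every choice of $c'\neq c$ produces a $(c,c')$-alternating $u$--$w$ path in $G-v$, one obtains $k-1$ edge-disjoint $u$--$w$ paths; together with the edges $uv$ and $vw$ that we would insert, this abundance of connections between $u$ and $w$ through $v$ should force a $K_4$-subdivision in $G$, contradicting series-parallelism. An alternative reduction that I would keep in reserve is to \emph{suppress} $v$ into a single edge $uw$, but this is not immediately safe: adding $uw$ can inflate $\Gamma$ on an odd set $U$ with $u,w\in U$. Reconciling this requires comparing such a witness $U$ in the reduced graph against $U\cup\{v\}$ in $G$ (whose cardinality is even, so it is not directly bounded by $\Gamma(G)$), combined with an edge-counting argument to recover the bound $k\ge\lceil\Gamma\rceil$. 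I expect this interplay between Kempe-chain reasoning, the absence of $K_4$-subdivisions, and the careful bookkeeping on $\Gamma$ under local surgery to be the technical crux of the proof.
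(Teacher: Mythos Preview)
Your plan has a fundamental gap at the very first step: the structural fact you invoke---``every series-parallel graph has a vertex of degree at most $2$''---holds for the underlying \emph{simple} graph, not for the multigraph $G$. A series-parallel multigraph need not have any vertex of degree $\le 2$. For instance, take $K_3$ with each edge given multiplicity $m\ge 2$: this graph is series-parallel, yet every vertex has degree $2m$. Your case analysis (``$\deg(v)\le 1$'', ``$\deg(v)=2$ with both edges parallel'', ``$\deg(v)=2$ with two distinct neighbors'') simply never applies to such a graph, so the induction cannot proceed. This is exactly why the paper remarks that the theorem ``is fairly easy for simple graphs; the difficulty lies in the presence of parallel edges.''

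The paper's proof handles this by applying a refined structural lemma (Lemma~\ref{l:main}) to the underlying simple graph $S$, obtaining one of four local configurations in $S$, and then performing carefully designed reductions on $G$ that manipulate \emph{multiplicities} (not just delete a single low-degree vertex). Verifying that the reduced graph $G'$ still satisfies $\Gamma(G')\le k$ is the real work, done via Lemma~\ref{ineq} and explicit bookkeeping; the extension of the coloring back to $G$ uses the counting Lemma~\ref{colorlemma} rather than Kempe chains.

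A secondary issue: even in the situation you do analyze, the fallback argument that ``$k-1$ edge-disjoint $u$--$w$ paths together with $u\,v\,w$ force a $K_4$-subdivision'' is not correct. Arbitrarily many internally disjoint $u$--$w$ paths form a generalized theta graph, which is series-parallel; a $K_4$-subdivision requires three pairwise internally disjoint paths among \emph{three} branch vertices, not just many paths between two. So the Kempe-chain obstruction is not resolved by the series-parallel hypothesis in the way you suggest.
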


It should be noted that Theorem~\ref{seymourthm} is fairly easy for simple
graphs; the difficulty lies in the presence of parallel edges.  Seymour's
proof is elegant and interesting, but the induction step requires the
verification of a large number of inequalities. We give a simpler proof, based
on a structural lemma about series-parallel graphs, which in turn is an easy
consequence of the well-known fact that every simple series-parallel graph has
a vertex of degree at most two. Our work was motivated by the list
edge-coloring conjecture of \cite{BolHar} (see also \cite[Problem 12.20]{JT}):

\begin{conjecture}
\label{listconj}
Every graph is $\chi'(G)$-edge-choosable.
\end{conjecture}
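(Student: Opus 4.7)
The plan is to lift the non-list proof of Theorem~\ref{seymourthm} to the list setting, which is the natural context in which the paper's structural simplification pays off and which delivers the portion of Conjecture~\ref{listconj} most directly within reach. The approach is induction on $|E(G)|$ via a reducible configuration: at each step, one identifies a small subgraph $H\subseteq G$ (a single edge, a pair of parallel edges, or a bundle incident with a low-degree vertex produced by the structural lemma advertised in the introduction), applies the inductive hypothesis to $L$-edge-color $G\setminus E(H)$ with the lists restricted to $E(G)\setminus E(H)$, and then extends to $H$ using the residual lists on its edges.

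First, I would formalize the inductive statement: for every assignment $L$ with $|L(e)|\ge\chi'(G)$ for all $e\in E(G)$, a proper $L$-edge-coloring exists, with base case a graph of a single edge. Second, I would show that the reducible configuration used to prove Theorem~\ref{seymourthm} also works in the list setting, namely that $\chi'(G\setminus E(H))\le\chi'(G)$ (so the inductive hypothesis applies with the same lower bound on list sizes) and that after any $L$-coloring of $G\setminus E(H)$ the residual lists on $E(H)$ remain large enough relative to their local degrees to permit completion. For the bundle of edges around a low-degree vertex, the completion naturally lives in a bipartite auxiliary multigraph, where I would appeal to Galvin's theorem on list edge-coloring of bipartite multigraphs to finish.

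The delicate part of the verification is showing that $\chi'$ does not drop by more than the number of locally available colors when $H$ is removed; in the series-parallel setting this should reduce to a monotonicity check for $\Delta$ and $\Gamma$ under the allowed reductions, with the odd-cardinality bookkeeping in the definition of $\Gamma$ being the same subtlety Seymour had to navigate in \cite{S90}, only now organized around the cleaner structural lemma supplied by this paper.

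The main obstacle is that Conjecture~\ref{listconj} is a famous open problem outside structured subclasses: for general graphs the only comparably general tool is Galvin's theorem, and even the list analogue of Vizing's theorem remains open. The approach above therefore cannot honestly be expected to handle graphs containing a $K_4$-subdivision, where no analogous reducible configuration is available and the interplay between the odd-set bound $\Gamma(G)$ and arbitrary list assignments can no longer be controlled by a local peeling argument. What the plan can realistically deliver is the list analogue of Theorem~\ref{seymourthm}, i.e.\ the restriction of Conjecture~\ref{listconj} to series-parallel multigraphs; the full conjecture would demand essentially new ideas.
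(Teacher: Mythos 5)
This statement is a \emph{conjecture} in the paper, not a theorem: the authors explicitly write that ``at present there seems to be no credible approach for proving the conjecture in full generality,'' and, more to the point, that the conjecture ``is open for series-parallel graphs with parallel edges'' --- their own efforts to settle exactly the case you target ``only resulted in a simpler proof of Theorem~\ref{seymourthm}.'' So there is no proof in the paper to compare against, and your proposal does not supply one either. Your closing caveat correctly concedes that the full conjecture is out of reach, but the fallback you claim to be able to deliver (the list analogue of Theorem~\ref{seymourthm} for series-parallel multigraphs) is precisely the open problem that motivated this paper and that its authors could not resolve.

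The concrete gap is in your second step, ``show that the reducible configuration also works in the list setting.'' The extension step of the paper's proof is Lemma~\ref{colorlemma}, and it is intrinsically a fixed-palette argument: all edges draw from the same set $\{1,\ldots,k\}$, so one can color the $u_0u_1$ edges ``using as many colors in $S_2$ as possible'' and then count how many of the $k$ colors remain for the $u_0u_2$ edges. With distinct lists this greedy exchange is unavailable --- a color in $S_2\setminus(S_0\cup S_1)$ need not lie in the list of any uncolored $u_0u_1$ edge --- and the three inequalities of the lemma no longer certify extendability. The same problem kills the case~(b) argument, where the set $A$ of colors is deliberately re-chosen to minimize overlap with the colors seen by $x$: that recoloring permutes colors across edges, which a list assignment forbids. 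Galvin's theorem does not rescue the extension either, because the edges to be extended all meet the common vertex $u_0$ (resp.\ $w$), so the extension problem is a system-of-distinct-representatives problem with lists constrained by the already-colored boundary, not a free bipartite list-edge-coloring problem; guaranteeing Hall's condition from $\Delta$ and $\Gamma$ alone, uniformly over all list assignments, is exactly the difficulty nobody has overcome. Your plan is a reasonable description of \emph{why} one would hope the paper's structure helps with Conjecture~\ref{listconj}, but it is not a proof, and no amount of ``monotonicity checks'' on $\Delta$ and $\Gamma$ fills the hole in the extension lemma.
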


At present there seems to be no credible approach for proving the conjecture in
full generality. We were trying to gain some insight by studying it for
series-parallel graphs. The conjecture has been verified for {\em simple}
series-parallel graphs in~\cite{JMT99}, but it is open for series-parallel
graphs with parallel edges. Our efforts only resulted in a simpler proof of
Theorem~\ref{seymourthm} and in a linear-time algorithm for checking whether or
not a series-parallel graph can be colored with a given number of colors.
Our algorithm substantially simplifies an earlier algorithm of Zhou, Suzuki 
and Nishizeki~\cite{ZhouSN96}.

\section{Three lemmas}

For our proof of Theorem~\ref{seymourthm} we need three lemmas.  The
first two are easy, and the third appeared in \cite{JMT99}.  Let $G$
be a graph, and let $u,v$ be adjacent vertices of $G$. We use $uv$ to
denote the unique edge with ends $u$ and $v$ in the underlying simple
graph of $G$. If $G$ has $m$ edges with ends $u$ and~$v$, then we say
that $uv$ has {\em multiplicity} $m$. If $u$ and $v$ are not adjacent,
then we say that $uv$ has multiplicity zero.  Let $G$ be a graph, let
$\kappa$ be a $k$-edge-coloring of a subgraph $H$ of $G$, let $u\in
V(G)$, and let $i\in\{1,2,\ldots,k\}$.  We say that $u$ {\em sees} $i$
and that $i$ {\em is seen by} $u$ if $\kappa(f)=i$ for some edge $f$
of~$H$ incident with $u$.

\begin{lemma}
\label{colorlemma}
Let $G$ be a graph, let $u_0\in V(G)$, let $u_1,u_2$ be distinct neighbors
of $u_0$, let $H$ be the graph obtained from $G$ by deleting all edges
with one end $u_0$ and the other end $u_1$ or $u_2$, and let
$\kappa$ be a $k$-edge-coloring of $H$. For $i=1,2$ let $m_i$ be
the multiplicity of $u_0u_i$ in $G$, and 
for $i=0,1,2$ let $S_i$ be the set of colors seen by $u_i$.  
If $m_1+|S_0\cup S_1|\le k$, $m_2+|S_0\cup S_2|\le k$ and
 $m_1+m_2 +|S_0\cup (S_1\cap S_2)|\le k$, 
then $\kappa$ can be extended to a $k$-edge-coloring of $G$.
\end{lemma}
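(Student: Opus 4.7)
The plan is to reduce extending $\kappa$ to a system-of-distinct-representatives problem and apply Hall's theorem. First, I would observe that extending $\kappa$ to $G$ amounts to choosing a set $A_1$ of $m_1$ colors for the $m_1$ parallel edges $u_0u_1$ (those edges share both endpoints, so they require distinct colors) and a set $A_2$ of $m_2$ colors for the edges $u_0u_2$. Because the two families meet at $u_0$, one also needs $A_1\cap A_2=\emptyset$, and compatibility with $\kappa$ on $H$ forces $A_i\cap(S_0\cup S_i)=\emptyset$ for $i=1,2$. Setting $X_i:=\{1,\ldots,k\}\setminus(S_0\cup S_i)$, the task becomes: find disjoint sets $A_1\subseteq X_1$ and $A_2\subseteq X_2$ with $|A_i|=m_i$.

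I would then recognize this as asking for a system of distinct representatives for the family consisting of $m_1$ copies of $X_1$ together with $m_2$ copies of $X_2$, and invoke Hall's theorem. The condition to check is that for every subfamily consisting of $a$ copies of $X_1$ and $b$ copies of $X_2$, the union of the listed sets has size at least $a+b$. This splits into three cases according to whether $a$ or $b$ is zero. The case $b=0$ reduces to $|X_1|\geq m_1$, which is exactly the first hypothesis; the case $a=0$ uses the second hypothesis analogously; and the case $a,b\geq 1$ requires $|X_1\cup X_2|\geq m_1+m_2$.

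The only substantive step, and what I would flag as the main ``obstacle'', is recognizing the set identity $X_1\cup X_2=\{1,\ldots,k\}\setminus\bigl(S_0\cup(S_1\cap S_2)\bigr)$, which follows from De Morgan applied to $(S_0\cup S_1)\cap(S_0\cup S_2)$. Once this is in hand, the third hypothesis yields precisely the inequality needed for the mixed Hall condition, and the desired extension of $\kappa$ exists. The whole argument is purely combinatorial on the color set $\{1,\ldots,k\}$ and uses no further property of $G$.
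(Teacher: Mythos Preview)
Your argument is correct. The reformulation in terms of finding disjoint $A_i\subseteq X_i$ with $|A_i|=m_i$ is exactly right, the Hall condition is checked correctly, and the identity $X_1\cup X_2=\{1,\ldots,k\}\setminus(S_0\cup(S_1\cap S_2))$ is indeed just De~Morgan together with distributivity.

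The paper takes a different, more hands-on route: it colors the $u_0u_1$ edges greedily from $X_1$, preferring colors in $S_2$ (equivalently, colors in $X_1\setminus X_2$), and then counts directly that at least $m_2$ colors of $X_2$ remain untouched. In effect the paper is reproving, by an explicit construction, the two-set special case of Hall's theorem that you invoke as a black box. Your approach is cleaner conceptually and makes the role of the three hypotheses transparent (each one is literally one Hall inequality); the paper's approach has the advantage of being constructive, which matters because the proof of Theorem~\ref{seymourthm} is later turned into an algorithm. One could also note that with only two distinct sets $X_1,X_2$ in the family, Hall's theorem is heavier machinery than strictly necessary---the three inequalities $|X_1|\ge m_1$, $|X_2|\ge m_2$, $|X_1\cup X_2|\ge m_1+m_2$ can be seen to suffice by a two-line greedy argument---but this is a matter of taste, not correctness.
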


\begin{proof}
  Since $m_1+|S_0\cup S_1|\le k$, the edges with ends $u_0$ and $u_1$
  can be colored using colors not in $S_0\cup S_1$.  We do that, using
  as many colors in $S_2$ as possible.  If the $u_0 u_1$ edges can be
  colored using colors in $S_2$ only, then there are at least
  $k-|S_0\cup S_2|\ge m_2$ colors left to color the edges with ends
  $u_0$ and $u_2$, and so $\kappa$ can be extended to a
  $k$-edge-coloring of $G$, as desired. Otherwise, the $u_0u_1$ edges
  of~$G$ will be colored using $|S_2 \setminus (S_0\cup S_1)|$ colors
  from $S_2$, and $m_1-|S_2 \setminus (S_0\cup S_1)|$ other colors.
  Thus the number of colors available to color the $u_0u_2$ edges of
  $G$ is at least $k-|S_0\cup S_2|-(m_1-|S_2 \setminus (S_0\cup
  S_1)|)=k-m_1-|S_0\cup (S_1\cap S_2)|\ge m_2$, and so the coloring
  can be completed to a $k$-edge-coloring of $G$, as desired.
\end{proof}

% % Previous version
% \begin{lemma}
% \label{ineq}
% Let $k$ be an integer, and let $G$ be a graph with $\Delta(G)\le k$.
% Then $\Gamma(G)\le k$ if and only if $2|E(G[U])|\le k(|U|-1)$ for
% every set $U \subseteq V(G)$ such that $|U|$ is odd and at least
% three, and the graph $G[U]$ has no vertices of degree at most one.
% \end{lemma}
% 
% \begin{proof} The ``only if" part is clear. To prove the ``if" part
% we must show that $2|E(G[U])|\le k(|U|-1)$ for every set $U \subseteq
% V(G)$ such that $|U|$ is odd and at least three.  We proceed by
% induction on $|U|$. We may assume that $G[U]$ has a vertex $u$ of
% degree at most one, for otherwise the conclusion follows from the
% hypothesis. If $u$ has degree one in $G[U]$, then let $v$ be its
% unique neighbor; otherwise let $v\in U\backslash \{u\}$ be arbitrary.
% %$2|E(G[U])|\le 2\Delta(G) = k(|U|-1)$; otherwise 
% Let $U'=U\backslash \{u,v\}$. Then $2|E(G[U])|\le 2\Delta(G) + 2|E(G[U'])|\le
% 2k+k(|U'|-1)\le k(|U|-1)$ by the induction hypothesis if $|U|>3$ and
% trivially otherwise, as desired.
% \end{proof}

% <-- I've changed only G[U] by underlying graph of G[U].
\begin{lemma}
\label{ineq}
Let $k$ be an integer, and let $G$ be a graph with $\Delta(G)\le k$.
Then $\Gamma(G)\le k$ if and only if $2|E(G[U])|\le k(|U|-1)$ for
every set $U \subseteq V(G)$ such that $|U|$ is odd and at least
three, and the underlying graph of $G[U]$ has no vertices of degree at
most one. 
\end{lemma}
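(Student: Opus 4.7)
The ``only if'' direction is immediate: the hypothesis $\Gamma(G)\le k$ gives $2|E(G[U])|\le k(|U|-1)$ for every odd $U$ with $|U|\ge 3$, in particular for those in which the underlying graph of $G[U]$ has no vertex of degree at most one. The content lies in the converse, which I would prove by induction on $|U|$.

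Fix an odd set $U\subseteq V(G)$ with $|U|\ge 3$. If the underlying graph of $G[U]$ has no vertex of degree at most one, the given condition supplies the bound directly. Otherwise, choose $v\in U$ of underlying degree at most one in $G[U]$, and let $w$ be its unique underlying-neighbor in $G[U]$, or, if $v$ is isolated in $G[U]$, any other vertex of $U$.

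For the base case $|U|=3$, the third vertex $x$ is not adjacent to $v$ in $G[U]$, so every edge of $G[U]$ is incident with $w$; hence $|E(G[U])|\le d_G(w)\le \Delta(G)\le k$, giving $2|E(G[U])|\le k(|U|-1)$. For the inductive step $|U|\ge 5$, set $U'=U\setminus\{v,w\}$, which is odd and of size at least three. The number of edges of $G[U]$ not lying in $G[U']$ equals $d_{G[U]}(v)+d_{G[U]}(w)-m(vw)$, where $m(vw)$ denotes the multiplicity of $vw$ in $G$. The choice of $v$ forces $d_{G[U]}(v)=m(vw)$ (both are zero when $v$ is isolated, and both equal the multiplicity of the unique edge at $v$ otherwise), so this count collapses to $d_{G[U]}(w)\le k$. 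Combined with the inductive estimate $2|E(G[U'])|\le k(|U|-3)$, this yields $2|E(G[U])|\le k(|U|-1)$.

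The one step that requires care is the induction: to preserve the parity of $|U|$ one must remove two vertices simultaneously, and a crude bound of $d_{G[U]}(v)+d_{G[U]}(w)$ for the edges lost would overshoot the budget of $2k$ released by shrinking $U$ by two elements. The saving cancellation $d_{G[U]}(v)=m(vw)$, enforced by the degree-at-most-one hypothesis on $v$, is precisely what makes the bookkeeping close, and it is exactly why one must pair $v$ with its unique underlying neighbor (when it exists) rather than an arbitrary vertex of $U$.
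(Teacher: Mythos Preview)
Your argument is correct and follows essentially the same route as the paper: induct on $|U|$, pick a vertex of underlying degree at most one in $G[U]$, pair it with its unique neighbor (or an arbitrary vertex if it is isolated), and observe that the edges lost when both are deleted number at most $\Delta(G)\le k$. Your exposition is slightly more explicit than the paper's---you spell out the base case $|U|=3$ and the inclusion--exclusion identity $d_{G[U]}(v)+d_{G[U]}(w)-m(vw)=d_{G[U]}(w)$ where the paper simply asserts $2|E(G[U])|\le 2\Delta(G)+2|E(G[U'])|$---but the underlying idea is identical.
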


\begin{proof} The ``only if'' part is clear. To prove the ``if'' part
  we must show that $2|E(G[U])|\le k(|U|-1)$ for every set $U
  \subseteq V(G)$ such that $|U|$ is odd and at least three.  We
  proceed by induction on $|U|$. We may assume that the underlying
  graph of $G[U]$ has a vertex $u$ of degree at most one, for
  otherwise the conclusion follows from the hypothesis. If $u$ has
  degree one in the underlying graph of $G[U]$, then let $v$ be its
  unique neighbor; otherwise let $v\in U\backslash \{u\}$ be
  arbitrary.
%$2|E(G[U])|\le 2\Delta(G) = k(|U|-1)$; otherwise 
  Let $U'=U\backslash \{u,v\}$. Then $2|E(G[U])|\le 2\Delta(G) +
  2|E(G[U'])|\le 2k+k(|U'|-1)\le k(|U|-1)$ by the induction hypothesis
  if $|U|>3$ and trivially otherwise, as desired.
\end{proof}

The third lemma appeared in~\cite{JMT99}. For the sake of completeness
we include its short proof.

\begin{lemma} \label{l:main}
Every non-null simple series-parallel graph $G$ has one of the
following:
\vspace*{-2mm}
\begin{itemize}
\item[(a)] a vertex of degree at most one,
\vspace*{-2mm}
\item[(b)] two distinct vertices of degree two with the same
neighbors,
\vspace*{-2mm}
\item[(c)] two distinct vertices $u,v$ and two not necessarily
distinct vertices $w,z\in V(G)\backslash\{u,v\}$ such that the
neighbors of $v$ are $u$ and $w$, and every neighbor of $u$ is equal
to $v$, $w$, or $z$, or
\vspace*{-2mm}
\item[(d)] five distinct vertices $v_1,v_2,u_1,u_2,w$ such that the
neighbors of $w$ are $u_1,u_2,v_1,v_2$, and for $i=1,2$ the neighbors
of $v_i$ are $w$ and $u_i$.
\end{itemize}
\end{lemma}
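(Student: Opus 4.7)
The plan is to induct on $|V(G)|$. If (a) fails, every vertex of $G$ has degree at least two, so the well-known fact that every simple series-parallel graph has a vertex of degree at most two yields a vertex $v$ of degree exactly two; let $u$ and $w$ be its neighbors.

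If $\deg_G(u)\le 3$, then conclusion (c) holds immediately: play $u,v,w$ of the lemma as $u,v,w$ here, and let $z$ be the third neighbor of $u$ (or $z=w$ when $\deg_G(u)=2$). By symmetry, the same works when $\deg_G(w)\le 3$, so we may assume $\deg_G(u),\deg_G(w)\ge 4$.

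We then consider $G'=G-v$, a simple series-parallel graph on one fewer vertex, and apply the inductive hypothesis to obtain one of (a)--(d) for $G'$. The key observation when lifting back is that only $u$ and $w$ gain an edge upon restoring $v$, so every other vertex has the same degree in $G$ as in $G'$; combined with $\deg_G(u),\deg_G(w)\ge 4$, this forbids $u$ and $w$ from playing the role of a low-degree distinguished vertex in any of the four configurations, and most subcases lift directly to the same configuration in $G$.

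The main obstacle is the subcase of (c)-in-$G'$ where its ``small-degree'' vertex $p$ lies in $\{u,w\}$, so that $p$ has degree exactly $4$ in $G$. Writing $p=u$, with $q$ the degree-two neighbor of $u$ supplied by (c) in $G'$ and $r,s$ the other two neighbors of $u$ in $G'$, the argument turns on where $w$ sits relative to $\{q,r,s\}$: if $w=r$ we obtain (b) in $G$ (the twins $v,q$ share the common neighborhood $\{u,r\}$), if $w=s$ we obtain (d) centered at $u$ with $v_1=v$, $v_2=q$, $u_1=w$, $u_2=r$, and the remaining possibility $w\notin\{q,r,s\}$ is resolved by applying the induction instead to $G-q$ (noting that $q$ has degree two, its neighbors $u,r$ have degree at least four, and $|V(G-q)|<|V(G)|$) and running through the same lifting analysis. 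The case where (d) holds in $G'$ is handled analogously, using that the central degree-$4$ vertex of the (d)-configuration cannot be $u$ or $w$ unless those have degree exactly five, which must then be matched against $v$ to force one of (b)--(d) in $G$.
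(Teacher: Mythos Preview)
There is a genuine gap in the very first reduction. You claim that if $\deg_G(u)\le 3$ then configuration (c) holds with the named $u,v,w$ and $z$ taken to be ``the third neighbor of $u$''. This tacitly assumes that $u$ is adjacent to $w$: if $\deg_G(u)=3$ but $uw\notin E(G)$, then the neighbors of $u$ are $v$ together with two vertices $a,b\ne w$, and no single $z$ satisfies $\{a,b\}\subseteq\{w,z\}$. A concrete counterexample is the ``dumbbell'': take disjoint triangles on $\{u,a,b\}$ and $\{w,c,d\}$ and join $u$ to $w$ by the path $u\,v\,w$. This graph is simple, series-parallel, and has minimum degree two; your chosen degree-two vertex $v$ has both neighbors $u,w$ of degree exactly $3$, neither adjacent to the other, so your argument would (incorrectly) output (c) and stop. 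The same failure occurs with $\deg_G(w)\ge 4$ --- replace the second triangle by a $K_4$ minus an edge attached at $w$ --- so the symmetric clause does not rescue it. Since your inductive step is entered only under the hypothesis $\deg_G(u),\deg_G(w)\ge 4$, this case is simply not covered anywhere.

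The later parts also rest on assertions that are not established. In the ``$w\notin\{q,r,s\}$'' subcase you switch to deleting $q$ and state that its neighbors $u,r$ both have degree at least four in $G$; but $r$ is merely the ``$w$'' of the (c)-configuration found in $G'$ and carries no degree constraint. Even granting that, ``running through the same lifting analysis'' on $G-q$ may land in the very same obstacle, and it is not clear this is a valid induction rather than an unbounded sequence of re-choices inside a fixed $G$. The treatment of case (d) in $G'$ is similarly only gestured at. For comparison, the paper sidesteps all of this by a different reduction: instead of deleting one arbitrary degree-two vertex, it argues that if (a)--(c) fail then some degree-two vertex lies on a triangle, deletes \emph{all} such vertices simultaneously, and applies induction to the resulting graph, after which the lift back to $G$ is uniform and short.
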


\begin{proof}
We proceed by induction on the number of vertices. Let $G$ be a
non-null simple series-parallel graph, and assume that the result
holds for all graphs on fewer vertices. We may assume that $G$ does
not satisfy (a), (b), or (c).  Thus $G$ has no two adjacent vertices
of degree two.  By suppressing all vertices of degree two (that is,
contracting one of the incident edges) we obtain a series-parallel
graph without vertices of degree two or less. Therefore, by a
well-known property of series-parallel graphs~\cite{Du}, this graph is
not simple.  Since $G$ does not satisfy (b), this implies that~$G$ has
a vertex of degree two that belongs to a cycle of length three.  Let
$G'$ be obtained from $G$ by deleting all vertices of degree two that
belong to a cycle of length three. First notice that if $G'$ has a
vertex of degree less than two, then the result holds for $G$ (cases
(a), (b), or case (c) with $w=z$). Similarly, if $G'$ has a vertex of
degree two that does not have degree two in $G$, then the result holds
(one of the cases (b)--(d) occurs).  Thus we may assume that $G'$ has
minimum degree at least two, and every vertex of degree two in $G'$
has degree two in $G$. By induction, (b), (c), or~(d) holds for~$G'$,
but it is easy to see that then one of (b), (c), or (d) holds for $G$.
\end{proof}

%\section{Edge-coloring series-parallel graphs}

%In this section we present the new proof of Theorem~\ref{seymourthm}.

\section{Proof of Theorem~\ref{seymourthm}}

%\begin{theorem}\label{t:main}
%If $G$ is series-parallel and $k \geq \max\{\Delta(G),\Gamma(G)\}$ then $G$ is
%$k$-edge-colorable.
%\end{theorem}

%\begin{proof}

We proceed by induction on $|E(G)|$, and, subject to that, by
induction on $|V(G)|$. The theorem clearly holds for graphs with no
edges, so we assume that $G$ has at least one edge, and that the
theorem holds for graphs with fewer edges or the same number of edges
but fewer vertices.  Let $S$ be the underlying simple graph of $G$.
We apply Lemma \ref{l:main} to $S$, and distinguish the corresponding
cases.

If case (a) holds, let $G'$ be the graph obtained from $G$ by removing
a vertex of degree at most one in $S$. The rest is straightforward: $k
\geq \max\{\Delta(G'), \Gamma(G')\}$ and so, by induction, there is a
$k$-edge-coloring of $G'$. From this $k$-edge-coloring, it is easy to
obtain a $k$-edge-coloring for $G$.

\begin{figure}[b]
\psfrag{(a)}{(a)}
\psfrag{(b)}{(b)}
\psfrag{G}{$G$}
\psfrag{G'}{$G'$}
\psfrag{u}{$u$}
\psfrag{v}{$v$}
\psfrag{x}{$x$}
\psfrag{y}{$y$}
\psfrag{a}{$a$}
\psfrag{b}{$b$}
\psfrag{c}{$c$}
\psfrag{d}{$d$}
\psfrag{a-d}{$a-d$}
\psfrag{b+d}{$b+d$}
\begin{center}\leavevmode%
\includegraphics{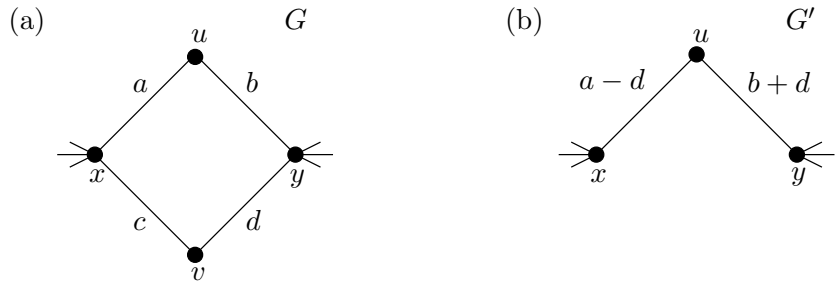}
\end{center}
\caption{Configurations referring to Case (b).}
\label{CaseB}
\end{figure}

If case (b) holds, let $u$ and $v$ be two distinct vertices of degree two in $S$
with the same neighbors. Let the common neighbors be $x$ and $y$.  Let $a,b,c,d$
be the multiplicities of $ux,uy,vx,vy$, respectively.  See
Figure~\ref{CaseB}(a). From the symmetry we may assume that $a\ge d$.  Let $G'$
be obtained from $G\backslash v$ by deleting $d$ edges with ends $u$ and $x$,
and adding $d$ edges with ends $u$ and $y$. See Figure~\ref{CaseB}(b). Then
clearly $\Delta(G')\le k$, and it follows from Lemma~\ref{ineq} that
$\Gamma(G')\le k$. By the induction hypothesis the graph $G'$ has a
$k$-edge-coloring $\kappa'$. Let $A$ be a set of colors of size $d$ used by a
subset of the edges of $G'$ with ends $u$ and $y$, chosen so that as few as
possible of these colors are seen by $x$.  By deleting those edges we obtain a
coloring of $G\backslash v$, where $d$ edges with ends $u$ and $x$ are
uncolored. Next we color those $d$ uncolored edges, first using colors in $A$
not seen by $x$, and then using arbitrary colors not seen by $x$ or $u$. This
can be done: if at least one color in $A$ is seen by $x$, then once we exhaust
colors of $A$ not seen by $x$, the choice of $A$ implies that every color seen
by $u$ is seen by $x$, and so the coloring can be completed, because $x$ has
degree at most $k$. This results in a $k$-edge-coloring of $G\backslash v$ with
the property that at least $d$ of the colors seen by $x$ (namely the colors in
$A$) are not seen by $y$. Thus the number of colors seen by both $x$ and $y$ is
at most $k-c-d$ ($v$ sees no colors), and clearly the number of colors seen by
$x$ is at most $k-c$ and the number of colors seen by $y$ is at most $k-d$. By
Lemma~\ref{colorlemma} this coloring can be extended to a $k$-edge-coloring of
$G$, as desired.

We now assume a special case of (c) of Lemma~\ref{l:main}. Let $u, v, w, z$ be
as in that lemma, with $w=z$. Then clearly $\Delta(G\backslash v)\le k$ and
$\Gamma(G\backslash v)\le k$, and so $G\backslash v$ has a
$k$-edge-coloring. This $k$-edge-coloring can be extended to a $k$-edge-coloring
of $G$ by first coloring the edges with ends $w$ and $v$ (this can be done
because the degree of $w$ is at most $k$), and then coloring the edges with ends
$u$ and $v$ (there are enough colors for this because $|E(G[U])| \le k$ for $U =
\{u,v,w\}$).

Finally we assume that case (d) of Lemma~\ref{l:main} holds and we will show
that our analysis includes the remainder of case (c) as a special case.  Let
$v_1, v_2, u_1, u_2$ and $w$ be as in the statement of Lemma~\ref{l:main}, and
let $a$, $b$, $c$, $d$, $e$ and $f$ be the multiplicities of $u_1v_1$, $u_1w$,
$v_1w$, $v_2w$, $u_2w$ and $u_2v_2$, respectively, as in Figure~\ref{CaseD}(a).  
In order to include case (c) we will not be assuming that $a$, $b$, $c$, $d$, 
$e$ and $f$ are nonzero; we only assume that $c+d>0$. (This is why the primary
induction is on $|E(G)|$.)  If $a+b+c+d+e+f\le k$, then a $k$-edge-coloring of
% $G\backslash w$ <-- it was like this, and I've changed to...
$G\backslash \{v_1w,v_2w\}$ can be extended to a $k$-edge-coloring of
$G$, and so we may assume that $k<a+b+c+d+e+f$. Since $w$ has degree
at most $k$ we have $b+c+d+e\le k$, and by considering the sets
$U=\{u_1,v_1,w\}$ and $U=\{u_2,v_2,w\}$ we deduce that $a+b+c\le k$
and $d+e+f\le k$. Let $z_1 = \max\{0, a+b+c+e-k\}$, $z_2 = \max\{0,
b+d+e+f-k\}$ and $s = k-(b+c+d+e)$. Thus $z_1\le e$, $z_2\le b$,
$s\ge0$ and 
\begin{eqnarray}
a+f-z_1-z_2-s=\cases{k-(b+e) & if $z_1>0$ and $z_2>0$\cr
  a+c & if $z_1=0$ and $z_2>0$\cr d+f & if $z_1>0$ and $z_2=0$\cr
  a+f-s & if $z_1=z_2=0$.}
\label{equa}
\end{eqnarray}
% $$a+f-z_1-z_2-s=\cases{k-(b+e) & if $z_1>0$ and $z_2>0$\cr
%   a+c & if $z_1=0$ and $z_2>0$\cr d+f & if $z_1>0$ and $z_2=0$\cr
%   a+f-s & if $z_1=z_2=0$.}\leqno(*)$$ 
We claim that there exist nonnegative integers $s_1$ and $s_2$ such
that $s = s_1+s_2$, $s_1 \leq a-z_1$ and $s_2 \leq f-z_2$.  To prove
this claim it suffices to check that $a-z_1\ge0$, $f-z_2\ge0$ and
$a-z_1+f-z_2\ge s$. We have $a-z_1 \geq \min\{a,k-(b+c+e)\}
\geq\min\{a,d\} \ge 0$, and by symmetry $f-z_2 \ge 0$.  The third
inequality follows from (\ref{equa}).  This proves the existence of 
$s_1$ and $s_2$. 

\begin{figure}[ht]
\psfrag{(a)}{(a)}
\psfrag{(b)}{(b)}
\psfrag{u1}{$u_1$}
\psfrag{u2}{$u_2$}
\psfrag{v1}{$v_1$}
\psfrag{v2}{$v_2$}
\psfrag{x}{$x$}
\psfrag{y}{$y$}
\psfrag{w}{$w$}
\psfrag{a}{$a$}
\psfrag{b}{$b$}
\psfrag{c}{$c$}
\psfrag{d}{$d$}
\psfrag{e}{$e$}
\psfrag{f}{$f$}
\psfrag{a-z1-s1}{$a{-}z_1{-}s_1$}
\psfrag{f-z2-s2}{$f{-}z_2{-}s_2$}
\psfrag{z1+z2}{$z_1{+}z_2$}
\psfrag{b-z2}{$b{-}z_2$}
\psfrag{e-z1}{$e{-}z_1$}
\begin{center}\leavevmode%
\includegraphics{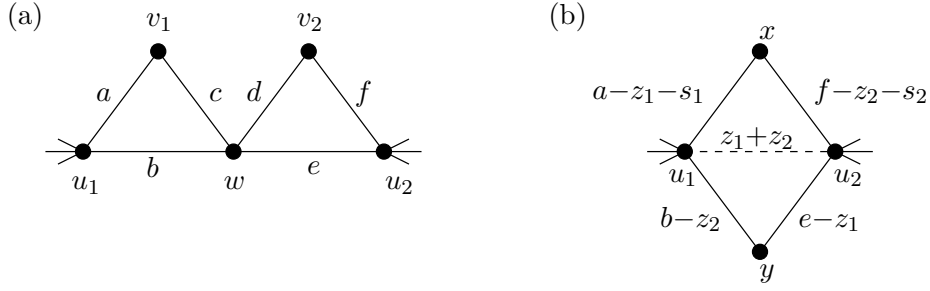}
\end{center}
\caption{Configurations referring to Case (d).}
\label{CaseD}
\end{figure}

Let $G'$ be obtained from $G$ by removing the vertices $v_1$, $v_2$, $w$,
adding two new vertices, $x$ and~$y$, and adding $a-z_1-s_1$ edges with ends
$x$ and $u_1$, $f-z_2-s_2$ edges with ends $x$ and $u_2$, $b-z_2$ edges with
ends $y$ and $u_1$, $e-z_1$ edges with ends $y$ and $u_2$, and $z_1+z_2$ edges
with ends $u_1$ and $u_2$. See Figure~\ref{CaseD}(b). Thus $|E(G')|<|E(G)|$.

It follows from (\ref{equa}) that $x$ has degree at most $k$.  Since all
other vertices of $G'$ clearly have degree at most $k$, we see that $k
\geq \Delta(G')$. We claim that $k \geq \Gamma(G')$. By
Lemma~\ref{ineq} we must show that $2|E(G'[X'])| \leq k(|X'|-1)$ for
every set $X'\subseteq V(G')$ such that $|X'|$ is odd, $|X'|\ge3$ and
the underlying graph of % <-- to go along with the new version of the lemma
$G'[X']$ has no vertices of degree at most one. 
If $|X'\cap\{u_1,u_2\}|\le1$, then $G[X']=G'[X']$, and the result
follows. Thus we may assume that $u_1,u_2\in X'$. We need to
distinguish several cases. 
If $x,y\in X'$, then let $X = X' \setminus \{x,y\}$. We have
$2|E(G'[X'])|=2|E(G[X])|+2(a-z_1-s_1+f-z_2-s_2+z_1+z_2+b-z_2+e-z_1)\le
k(|X'|-1)$, using the induction hypothesis and the relations
$s_1+s_2=k-(b+c+d+e)$, $z_1\ge a+b+c+e-k$ and $z_2\ge b+d+e+f-k$.  If
$x\in X'$ and $y\not\in X'$ we put $X = X' \setminus \{x\} \cup
\{w,v_1,v_2\}$, and if $x\not\in X'$ and $y\in X'$ we put $X = X'
\setminus \{y\} \cup \{w\}$. In either of these two cases the counting
is straightforward. Finally, we assume that $x,y\not\in X'$. If
$z_1=z_2=0$, then $G[X']=G'[X']$, and so the conclusion holds. If
$z_1>0$ and $z_2>0$, then let $X = X' \setminus \{u_1,u_2\}$. We have
$2|E(G'[X'])|\le 2|E(G[X])|+2(k-(a+b)+k-(e+f)+z_1+z_2)\le
k(|X|-1)+2(b+c+d+e)\le k(|X'|-1)$, where the second inequality follows
from the induction hypothesis (or is trivial if $|X|=1$) and the
definition of $z_1$ and $z_2$. Finally, from the symmetry between
$z_1$ and $z_2$ it suffices to consider the case $z_1=0$ and
$z_2>0$. In that case we put $X = X' \cup \{w,v_2\}$. Then
$2|E(G'[X'])|=2|E(G[X])|+2(z_1+z_2-(b+d+e+f))\le k(|X'|-1)$, using the
induction hypothesis and the definition of $z_1$ and $z_2$.  This
completes the proof that $k \geq \Gamma(G')$.

By induction there exists a $k$-edge-coloring $\kappa'$ of $G'$.  Let
$Z_1\cup Z_2$ be the colors used on the $z_1+z_2$ edges of
$E(G') \setminus E(G)$ with ends $u_1$ and $u_2$, so that $|Z_1|=z_1$ and
$|Z_2|=z_2$.  Let $G''$ be the graph obtained from $G$ by deleting all
edges with one end $w$ and the other end $v_1$ or $v_2$. We first
construct a suitable $k$-edge-coloring $\kappa''$ of $G''$. To do so
we start with the restriction of $\kappa'$ to $E(G'')\cap E(G')$, and
then use $Z_1$ and the colors of the $xu_1$ edges of $G'$ to color a
subset of the $u_1v_1$ edges of $G$, we use $Z_2$ and the colors of
the $yu_1$ edges of $G'$ to color all of the $wu_1$ edges of $G$, and
symmetrically we use $Z_1$ and the colors of the $u_2y$ edges of $G'$
to color all the $wu_2$ edges of $G$, and we use $Z_2$ and all the
colors of the $xu_2$ edges of $G'$ to color a subset of the $v_2u_2$
edges of $G$. We color the $s_1$ uncolored $u_1v_1$ edges and the
$s_2$ uncolored $u_2v_2$ edges arbitrarily. That can be done, because
$u_i$ is the only neighbor of $v_i$ in $G''$.  This completes the
definition of $\kappa''$.  Now the number of colors seen by $v_1$ or
$w$ is at most $a-z_1-s_1+z_1+z_2+b-z_2+e-z_1+s_1=a+b+e-z_1\le k-c$,
and similarly the number of colors seen by $v_2$ or $w$ is at most
$k-d$.  The number of colors seen by $w$, or by both $v_1$ and $v_2$
is at most $b-z_2+e-z_1+z_1+z_2+s\le k-(c+d)$.  By
Lemma~\ref{colorlemma} the $k$-edge-coloring $\kappa''$ can be
extended to a $k$-edge-coloring of $G$, as desired.

%This coloring satisfies the three conditions of the claim, since $|E
%\setminus A| = e-z_1 = k-(a+b+c)$, $|B \set minus F| = b-z_2 =
%k-(d+e+f)$ and $|A \cap F| \leq s = k-(b+c+d+e)$. Therefore, by the
%claim, it can be extended to a $k$-edge-coloring of $G$, completing
%the proof of the theorem. 
%\end{proof}

\section{A linear-time algorithm}\label{algosec}
%%%%%%%%%%%%%%%%%%%%%%%%%%%%%%%%%%%%%%%%%%%%%%%%
In this section we present a linear-time algorithm to decide whether
$\chi'(G)\le k$, where the series-parallel graph $G$ and the
integer $k$ are part of the input instance.  The idea of the algorithm
is very simple -- we repeatedly find vertices of the underlying simple
graph satisfying one of (a)--(d) of Lemma~\ref{l:main}, construct the
graph $G'$ as in the proof of Theorem~\ref{seymourthm}, apply the
algorithm recursively to $G'$ to check whether $\chi'(G')\le k$, and
from that knowledge we deduce whether $\chi'(G)\le k$.  
The construction of $G'$ is straightforward, and the decision whether
$\chi'(G)\le k$ is easy: suppose, for instance, that we find vertices
$v_1,v_2,u_1,u_2,w$ as in Lemma~\ref{l:main}(d), and let $a,b,c,d,e,f$
be as in the proof of Theorem~\ref{seymourthm}. If $a+b+c+d+e+f\ge k$,
then construct $G'$ as in the proof; we have $\chi'(G)\le k$ if and only if
$\chi'(G')\le k$ and $a+b+c\le k$ and $d+e+f\le k$. If $a+b+c+d+e+f\le
k$, then $\chi'(G)\le k$ if and only if $\chi'(G\backslash w)\le k$.
Thus it remains to describe how to find the vertices as in
Lemma~\ref{l:main}.  That can be done by a slight modification of a
linear-time recognition algorithm for series-parallel graphs.  We need
a few definitions in order to describe the algorithm.

Let $H$ be a graph, and let $\lambda$ be a function assigning to each edge $e
\in E(H)$ a set $\lambda(e)$ disjoint from $V(H)$ in such a way that $\lambda(e)
\cap \lambda(e') = \emptyset$ for distinct edges $e, e'\in E(H)$. Let
$H_\lambda$ be the graph obtained from $H$ by adding, for each edge $e\in E(H)$
and each $x \in \lambda(e)$, a vertex $x$ of degree two, adjacent to the two
ends of $e$.  Then $H_\lambda$ is unique up to isomorphism, and so we can speak
of the graph $H_\lambda$.  Now let $\mu:E(H_\lambda)\to\bbz^+_0$ be a function,
and let $H^\mu_\lambda$ be the graph obtained from $H_\lambda$ by replacing each
edge $e\in E(H_\lambda)$ by $\mu (e)$ parallel edges with the same ends. In
those circumstances we say that $(H,\lambda,\mu)$ is an {\em encoding}, and that
it is an {\em encoding} of $H^\mu_\lambda$.

For a graph $H$ and $v\in V(H)$ we let deg$_H(v)$ denote the number of
edges incident to $v$ in $H$ and val$_H(v)$ denote the number of
distinct neighbors of $v$ in $H$. Thus val$_H(v)\le\deg_H(v)$ with
equality if and only if $v$ is incident with no parallel edges. We say
that a function $C:V(H) \to\bbz^+_0$ is a {\em counter} for a graph
$H$ if $\deg_H(v)-\mbox{val}_H(v)\le C(v)$ for every vertex $v\in
V(H)$.  We say that a vertex $v\in V(H)$ is {\em active} if either
$\deg_H(v)\le 2$ or $\deg_H(v)\le 3C(v)$.

The following lemma guarantees that if there are no active vertices,
then the graph is null.

\begin{lemma}\label{active}
Let $H$ be a non-null series-parallel graph, and let $C$ be a counter for $H$.
Then there exists an active vertex.
\end{lemma}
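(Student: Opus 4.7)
The plan is to reduce Lemma~\ref{active} to a standard structural property of the underlying simple graph. Let $S$ be the underlying simple graph of $H$. Since $H$ is non-null and series-parallel, so is $S$, and by the classical fact that every simple series-parallel graph has a vertex of degree at most two (see~\cite{Du}; one could equivalently invoke Lemma~\ref{l:main}, since each of its four cases exhibits such a vertex), there is a $v\in V(S)=V(H)$ with $\deg_S(v)\le 2$, i.e., $\mbox{val}_H(v)\le 2$. I will show that this $v$ is active.

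The case analysis on $\deg_H(v)$ is very short. If $\deg_H(v)\le 2$, then $v$ is active by the first clause of the definition. Otherwise $\deg_H(v)\ge 3$, and the counter assumption gives $C(v)\ge \deg_H(v)-\mbox{val}_H(v)\ge \deg_H(v)-2$, so that $3C(v)\ge 3\deg_H(v)-6\ge \deg_H(v)$, where the last inequality uses $\deg_H(v)\ge 3$. Hence $v$ satisfies the second clause of the definition of active.

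I do not anticipate any real obstacle. The only subtle point is locating the correct structural input for series-parallel graphs; the minimum-degree-two fact is exactly what the definition of ``active'' was calibrated against, and the factor $3$ in $\deg_H(v)\le 3C(v)$ is essentially forced so that the arithmetic in the second case just works when $\mbox{val}_H(v)=2$ and $\deg_H(v)=3$. Everything else is an immediate one-line check.
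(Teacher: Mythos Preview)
Your proof is correct and follows essentially the same approach as the paper: both pick a vertex $v$ with $\mathrm{val}_H(v)\le 2$ via the underlying simple graph and then use the counter inequality to force either $\deg_H(v)\le 2$ or $\deg_H(v)\le 3C(v)$. The only cosmetic difference is that the paper argues by contraposition (assuming $\deg_H(v)>3C(v)$ and deducing $\deg_H(v)\le 2$), whereas you split into the two cases directly; the arithmetic is the same.
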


\begin{proof}
As noted in the proof of Lemma~\ref{l:main}, the underlying simple graph of
$H$ has a vertex of degree at most two.  Thus $H$ has a vertex $v$ with
val$_H(v)\le 2$.  If $\deg_H(v)> 3C(v)$, then
$$\deg(v)-2\le \deg_H(v)-\mbox{val}_H(v)\le C(v)< \deg_H(v)/3,$$
which implies $\deg_H(v)\le 2$.  Thus $v$ is active, as desired.
\end{proof}

\subsection{The algorithm}

The input for the algorithm is a series-parallel graph $G$ and a non-negative
integer $k$, where the graph $G$ is presented by means of its underlying
undirected graph and a function $E(G) \to\bbz^+$ that describes the
multiplicity of each edge. 

The algorithm starts by checking whether $\deg_G(v) \le k$ for all $v \in
V(G)$. If not, it outputs ``no, $\chi'(G)\not\le k$'' and terminates. Otherwise
let $H$ be the underlying undirected graph of $G$, let $\lambda(e) := \emptyset$
for every edge $e \in E(H)$, let $\mu(e)$ be the multiplicity of $e$ in $G$, and
let $C(v) := 0$ for every $v \in V(H)$.  Then $(H,\lambda,\mu)$ is an encoding
of $G$ and $C$ is a counter for $H$. The algorithm computes the list of all
active vertices of $H$. It does not matter how $L$ is implemented as long as
elements can be deleted and added in constant time.

After this, the algorithm is iterative. Each iteration starts with an encoding
$(H,\lambda,\mu)$ of the current series-parallel graph $G$, a counter $C$ for
$H$ and a list $L$ which includes all active vertices of~$H$.

Each iteration consists of the following. If $L = \emptyset$, then we output
``yes, $\chi'(G)\le k$'' and terminate, else we let $v$ be a vertex in $L$. If
$v \not\in V(H)$ or $v$ is not active, then we remove $v$ from $L$ and move to
the next iteration. If $v \in V(H)$ and $v$ is active, then there are three
possible cases.

If $\deg_H(v) > 2$, then $\deg_H(v) \le 3C(v)$, because $v$ is active. We
rearrange the adjacency list of~$v$, removing all but one edge from each class
of parallel edges incident with $v$, adjusting $\lambda$ and $\mu$ so that
$(H,\lambda,\mu)$ is still an encoding of $G$. We set $C(v) := 0$, include in
$L$ all vertices whose degree decreased and move to the next iteration.

If $\deg_H(v) = \mbox{val}_H(v) = 2$ and $\lambda(vx) = \lambda(vy) = 0$, where
$x$ and $y$ are the two distinct neighbors of $v$, then we remove $v$ from $H$
and add a new edge $f=xy$ to $H$. We set $\mu(f):=0, \lambda(f):=\{v\}$,
increase both $C(x)$ and $C(y)$ by one, add $x$ and $y$ to $L$ and move to the
next iteration.

If $\deg_H(v) \le 2$ but the previous case does not apply, then we have located
vertices of $G$ satisfying one of (a) to (d) of Lemma~\ref{l:main}. We check if
the local conditions are satisfied or not (for example, in case (d), if
$a+b+c+d+e+f\ge k$, we check whether $a+b+c\le k$ and $d+e+f\le k$); if they are
not, we output ``no, $\chi'(G)\not\le k$'' and terminate.  Otherwise, we modify
the encoding $(H,\lambda,\mu)$ to get an encoding of the graph $G'$ described in
the proof of Theorem~\ref{seymourthm}. This involves deleting vertices from $H$
and adding edges to $H$. Every time an edge of $H$ incident with a vertex $z\in
V(H)$ is deleted or added we increase $C(z)$ by one and add $z$ to $L$. We move
to the next iteration.

\vspace{3mm}

The correctness of the algorithm follows from Lemma~\ref{active} and from the
proof of Theorem~\ref{seymourthm}. 

To analyze the running-time, let $n$ denote the number of vertices of the input
graph $G$. The initial steps of the algorithm can be done in $O(n)$ time. Each
iteration takes time proportional to the decrease in the quantity
$$2K\cdot |V(H)|+K\cdot\sum_{e\in E(H)}\lambda (e)+|L|+4\cdot\sum_{v\in
V(H)}C(v),$$ where $K$ is a sufficiently large constant. Thus the running-time
of the algorithm is $O(n)$.

\baselineskip 11pt
\vfill
\noindent
This material is based upon work supported by the National Science
Foundation under Grant No.~DMS-9970514.
Any opinions, findings, and conclusions or recommendations expressed
in this material are those of the authors and do not necessarily
reflect the views of the National Science Foundation.

\end{document}